\title{The Covering Canadian Traveller Problem Revisited} 
\newcommand{\lipsix}{Sorbonne Universit\'e, CNRS, LIP6, F-75005 Paris, France}
\newcounter{mycounter}
\newtheorem{myexample}[mycounter]{Example}
\author{Niklas Hahn}{\lipsix}{niklas.hahn@lip6.fr}{https://orcid.org/0000-0002-4929-0542}{}
\author{Michalis Xefteris}{\lipsix}{michail.xefteris@lip6.fr}{https://orcid.org/0009-0006-2894-3029}{}
\authorrunning{N. Hahn and M. Xefteris} 
\keywords{Online Algorithm, Canadian Traveller Problem, Travelling Salesperson Problem, Graph Exploration}
\newcommand{\cnn}{\texttt{CNN}}
\newcommand{\newgraph}{\ensuremath{G_p^{+}}}
\begin{document}

\maketitle

\begin{abstract}
In this paper, we consider the $k$-Covering Canadian Traveller Problem ($k$-CCTP), which can be seen as a variant of the Travelling Salesperson Problem. The goal of $k$-CCTP is finding the shortest tour for a traveller to visit a set of locations in a given graph and return to the origin. Crucially, unknown to the traveller, up to $k$ edges of the graph are blocked and the traveller only discovers blocked edges online at one of their respective endpoints. The currently best known upper bound for $k$-CCTP is $O(\sqrt{k})$ which was shown in [Huang and Liao, ISAAC '12]. We improve this polynomial bound to a logarithmic one by presenting a deterministic $O(\log k)$-competitive algorithm that runs in polynomial time. Further, we demonstrate the tightness of our analysis by giving a lower bound instance for our algorithm.
\end{abstract}

\newpage

\section{Introduction}
The Canadian Traveller Problem (CTP) was introduced in 1991 by Papadimitriou and Yannakakis~\cite{PY} as an extension of the Shortest Path Problem and has applications in online route planning in road networks. The goal of the problem is to find a shortest path between a source and a destination in an unreliable graph, in which some edges may become unavailable. This can only be observed in an online manner, i.e., when reaching one of the endpoints of such an edge. More specifically, consider a connected, undirected graph $G = (V, E)$ with a source node $s \in V$, a destination node $t \in V$ and a non-negative cost function $c\colon E \rightarrow \mathbb{R^+}$ representing the cost to traverse each edge. A traveller seeks to find a path with minimum cost from $s$ to $t$. However, one or more edges might be blocked, and thus cannot be traversed. The traveller only learns that an edge is blocked when reaching one of its endpoints. When the number of blocked edges is bounded by $k$, the variant is called $k$-Canadian Traveller Problem or $k$-CTP~\cite{BNS}.

This work studies a generalization of CTP, defined in~\cite{huang}, which is called the Covering Canadian Traveller Problem (CCTP). In CCTP, one attempts to develop an efficient tour for a traveller that visits all vertices in a graph and returns to the origin (source) under the same uncertainty as that of CTP. When the number of blocked edges is bounded by $k$, the problem is called $k$-CCTP, analogous to the $k$-CTP variant of CTP.

We make two assumptions on the underlying graph model, similar to~\cite{huang}. First, we assume that the graph remains connected even if all blocked edges are removed. Second, the state of an edge, i.e., whether it is blocked or not, does not change after the traveller learns about it. The problem has practical uses in dynamic routing systems that prioritize efficient travel routes and aim to avoid traffic congestion. Since Huang and Liao introduced CCTP~\cite{huang}, there has been a notable amount of work on similar problems in the literature~\cite{follow3, follow4, follow1, follow2}. For example, Zhang et al.~\cite{zhang} studied the Steiner Travelling Salesperson Problem in which the salesperson instantly learns about new blocked edges. Shiri et al.~\cite{Shiri2} focused on how to allocate and route search-and-rescue teams to areas with trapped victims, considering the uncertainty about road conditions which may delay the operations. 

The motivation behind CCTP stems from other similar optimization problems, such as dynamic TSP and online TSP. Dynamic TSP has been studied for various different types of dynamic change, such as the addition or removal of locations, and changing pairwise distances between locations in the underlying space~\cite{Larsen, Toriello}. On the other hand, Ausiello et al.~\cite{ausiello} introduced the online TSP in which the input arrives over time, i.e., during the travel new requests (locations) appear that have to be visited by the algorithm.  The problem has many practical applications, e.g., in logistics and robotics \cite{Ascheuer,Psaraftis}. Since its introduction, a series of papers has been published on the subject~\cite{ausiello2,Bjelde,jaillet}.


As is usual in the literature on online problems, we measure the performance of our algorithm by its competitive ratio~\cite{Borodin}. This means that its performance is compared to the performance of an algorithm for the corresponding offline problem. In our setting, this would be an algorithm which knows the complete graph structure, including all blocked edges.

\paragraph*{Our Contribution}
In this paper, we focus on $k$-CCTP. Currently, the best known deterministic algorithm for $k$-CCTP is the \textsc{Cyclic Routing} algorithm by Huang and Liao~\cite{huang} with competitive ratio $O(\sqrt{k})$. We improve this bound to $O(\log k)$ by making a connection with the Online Graph Exploration problem. In the Online Graph Exploration problem, a searcher starts from a source vertex and aims to visit all vertices of an unknown but fixed graph. Upon reaching a new vertex, the server learns all incident edges and their costs. The reduction we give allows us to get a polynomial time algorithm for $k$-CCTP using an algorithm for the Exploration problem. Finally, we show that our analysis of the $O(\log k)$-competitive algorithm is tight.

\section{Related Work}
\paragraph*{Online Graph Exploration Problem}

In the Online Graph Exploration problem, defined in~\cite{pruhs}, an agent has to explore an unknown graph by starting at a given vertex, visiting all other vertices, and returning to the starting one. The agent can only move along the edges of the graph and has to pay a cost for each traversed edge.

A simple and fast algorithm that solves the problem is the
Nearest Neighbor (NN) algorithm. The algorithm selects an unexplored vertex that is cheapest to reach from the current one and visits it, repeating this process until all vertices are visited. This algorithm has been shown to have a competitive ratio of $\Theta(\log n)$ for arbitrary graphs in the Online Graph Exploration Problem~\cite{Rosenkrantz}, which is a tight bound even on planar unit-weight graphs~\cite{Fritsch, Hurkens}. Note that although the analysis in~\cite{Rosenkrantz} deals with the offline problem, the nearest neighbor can always be identified even in the online scenario and the same analysis applies. The second algorithm that achieves the $\Theta(\log n)$ bound, which is the best known upper bound for arbitrary graphs, is the hierarchical Depth First Search algorithm (\textsc{hDFS}) in~\cite{megow}. 

On the other hand, obtaining constant-competitive tours is known only for special cases of graphs, such as graphs with $k$ distinct weights, graphs with bounded genus, cycles, tadpole graphs and cactus graphs~\cite{brandt, Fritsch, pruhs, megow, miyazaki}. Conversely, the best known lower bound on the competitive ratio of an online algorithm is just $10/3$~\cite{Birx}, and despite efforts, it is still unclear whether there exists an $o(\log n)$ or even $O(1)$-competitive exploration algorithm for general graphs.

\paragraph*{Canadian Traveller Problem}
CTP has been proven to be PSPACE-complete~\cite{PY}. For the $k$-CTP variant, Bar-Noy and Schieber proposed a polynomial time algorithm that minimizes the maximum travel length~\cite{BNS}. Westphal developed a simple deterministic online algorithm for $k$-CTP that is $(2k+1)$-competitive and proved that no deterministic online algorithm can have a (strictly) better competitive ratio~\cite{WESTPHAL}. Furthermore, he showed a lower bound of $k+1$ for any randomized algorithm, even if all $s-t$ paths are node disjoint. Xu et al.~\cite{Xu} proposed a deterministic algorithm that is also $(2k+1)$-competitive for $k$-CTP and proved that a natural greedy strategy based on the available blockage information is exponential in $k$. On graphs where all $s-t$ paths are node-disjoint, a $(k+1)$-competitive randomized online algorithm is known~\cite{bender,Shiri}. Demaine et al.~\cite{Demaine} proposed a polynomial time randomized algorithm that improves the deterministic lower bound of $2k+1$ by an $o(1)$ factor for arbitrary graphs. They also showed that the competitive ratio is even better if the randomized algorithm runs in pseudo-polynomial time. Recently, Bergé et al.~\cite{berge} proved that the competitive ratio of any randomized algorithm using a specific set of strategies called memoryless cannot be better than $2k + O(1)$. Over the last few years, various other variants of CTP have been investigated~\cite{evripidis, huang2, KN}.

\paragraph*{Covering Canadian Traveller Problem}
The best known algorithm for $k$-CCTP is the one proposed in~\cite{huang} with a competitive ratio of $O(\sqrt{k})$. The algorithm, called \textsc{Cyclic Routing}, decomposes the entire route  into several rounds. In each round, the traveller attempts to visit as many vertices as possible in the graph following the visiting order (or the reverse order) of the tour  derived by Christofides' algorithm~\cite{Christofides}.

\section{Preliminaries}
In this section, we give some basic definitions. We start by giving a formal definition of the problem we study, before defining the Online Graph Exploration Problem, the performance measure and restating Christophides' algorithm for completeness. In what follows, we will denote by $G=(V,E)$ a weighted, undirected graph. We will interchangeably use the notion of ``cost'' and ``length'' for the weight of an edge. For example, a shortest tour is a tour of minimum cost.
 
\paragraph*{Definition of $\boldsymbol{k}$-CCTP}
The formal definition of CCTP is as follows. Given a complete metric graph $G = (V , E)$ with a source vertex $s \in V$, a traveller aims, beginning from $s$, to visit every other vertex in $V$ at least once and return to $s$ with as little cost as possible. However, the traveller discovers online that some edges are blocked once reaching one of their endpoints. Moreover, as mentioned earlier, two assumptions are made. First, the blocked edges cannot isolate vertices of $G$, i.e., $G$ remains connected, and second, edges remain in their state (i.e., whether they are blocked or not) forever. In this paper, we consider its variant $k$-CCTP where the number of blocked edges is bounded by $k$. 

\paragraph*{Definition of the Online Graph Exploration Problem}
The problem can be formalized as follows. Let $G=(V,E)$ be a weighted, undirected graph with $|V|=n$ vertices. The agent starts at a vertex $s \in V$ and has to visit every vertex in the graph and return to $s$. Note that the agent can visit a vertex more than once. At each step, the agent is located at a vertex $u$ and can choose to move to any of the neighboring vertices of $u$. The agent incurs a cost equal to the cost of the edge traversed. Upon arriving at a vertex $v$, the agent learns all the edges incident to $v$ and their costs. 

\paragraph*{Competitive Ratio}
A deterministic online algorithm $ALG$ for $k$-CCTP is $c$-competitive if the total cost $|ALG(\sigma)|$ accrued by $ALG$ for input $\sigma$ is at most $c \cdot |OPT(\sigma)|$. Here, $|OPT(\sigma)|$ is the total cost of an optimal tour for $\sigma$ which is computed by an offline algorithm that already knows all blocked edges.

\paragraph*{Christophides' algorithm}
We also remind the reader how Christophides' algorithm works. Christophides' algorithm on a complete metric graph $G$ can be described as follows~\cite{goodrich}:
\begin{enumerate}
        \item Create a minimum spanning tree $T$ of $G$.
        \item Find a minimum-weight perfect matching $M$ in the subgraph of $G$ that is induced by the vertices with odd degrees in $T$.
        \item Combine the edges of $M$ and $T$ to form a connected multigraph $H$.
        \item Form a Eulerian cycle in $H$.
        \item Make the circuit found in the previous step into a Hamiltonian cycle by skipping repeated vertices.
\end{enumerate}

\section{Solving \textit{k}-CCTP via Graph Exploration}

In this section, we present the results of our work. First, we show a connection between CCTP and the Online Graph Exploration problem (Theorem~\ref{reduction}). This is the crucial step to improve the upper bound of $O(\sqrt{k})$. 

The idea behind our reduction is that CCTP can be solved by an algorithm that solves the Online Graph Exploration Problem. This is possible since at every step, the traveller locally learns the real edges in both problems. The challenge here is that the algorithms for the Online Graph Exploration for arbitrary graphs have competitive ratios depending on the number of vertices $n$. 

So, how can we reduce the size of the graph in which we run an algorithm for Graph Exploration to something of size $O(k)$? First, we try to follow an approximately optimal TSP tour, skipping vertices when edges are discovered to be blocked.
Similar to the idea in \textsc{Cyclic Routing} of~\cite{huang}, we use a function \textsc{ShortCut} to achieve this.
After that, we return to the starting vertex. This way, we visit at least $n-k$ vertices of $G$. Since they do not have to be visited again, we can then use the information gathered to reduce the number of vertices in the graph on which we will run the algorithm for Graph Exploration to $O(k)$. Formally, we have the following theorem. 

\begin{theorem} \label{reduction}
If there exists an $f(k)$-competitive algorithm for the Online Graph Exploration problem on graphs with at most $k+1$ nodes and an $\alpha$-approximation algorithm for metric TSP, then there exists an $(f(k)+2\alpha)$-competitive algorithm for $k$-CCTP.
\end{theorem}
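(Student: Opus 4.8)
The plan is to build a two-phase algorithm for $k$-CCTP that first cheaply visits all but at most $k$ vertices, and then invokes the exploration algorithm on a graph small enough (at most $k+1$ nodes) for the $f(k)$ guarantee to be useful. In the first phase I would compute an $\alpha$-approximate metric TSP tour $\tau = (s = v_0, v_1, \dots, v_{n-1}, s)$ on the complete graph $G$ (ignoring blockages, since they are unknown), and then traverse $\tau$ from $s$, using \textsc{ShortCut} to bypass blocked edges: whenever the edge to the next not-yet-visited tour vertex is blocked, skip that vertex and attempt the following one, routing through already-visited vertices when necessary. The crucial counting observation is that every skipped vertex can be charged to a distinct newly discovered blocked edge, so after this phase at most $k$ vertices remain unvisited; let $U$ denote this set, so $|U| \le k$.

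For the cost accounting I would use two comparisons to the offline optimum $|OPT|$ of the CCTP instance. First, writing $G'$ for the shortest-path metric of the graph with all blocked edges removed, the optimal CCTP tour is exactly a minimum closed walk visiting all vertices in $G'$, so its length equals the metric-TSP optimum of $G'$; since every direct edge of the complete metric graph $G$ is no longer than the corresponding shortest path in $G'$, the metric-TSP optimum of $G$ is at most $|OPT|$. Hence $\tau$ has length at most $\alpha \cdot |OPT|$, and I would argue that the realized \textsc{ShortCut} walk of the first phase (which must also return the traveller to $s$, and may incur detours through visited vertices when a shortcutting edge is itself blocked) costs at most $2\alpha \cdot |OPT|$, charging each traversed segment to the tour edges it telescopes over via the triangle inequality. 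Second, since $U \cup \{s\}$ is a subset of the point set, the metric-TSP optimum on $U \cup \{s\}$ is at most that on all of $V$, again by shortcutting, hence at most $|OPT|$.

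In the second phase I would form the graph \newgraph{} on the vertex set $U \cup \{s\}$, which has at most $k+1$ nodes, and run the $f(k)$-competitive exploration algorithm on it, simulating each of its moves by an actual traversal in $G$ between the corresponding vertices. The exploration algorithm visits every vertex of $U$ and returns to $s$ at cost at most $f(k)$ times the optimal exploration tour of \newgraph{}; since the optimal exploration tour of a metric graph coincides with its metric-TSP optimum, this is at most $f(k) \cdot |OPT|$ by the subset bound above. Adding the two phases yields total cost at most $(f(k) + 2\alpha)\,|OPT|$, as claimed, and correctness follows since the first phase covers $V \setminus U$ and the second covers $U$, both returning to $s$.

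The main obstacle I anticipate is making the second phase rigorous: I must define the edge costs of \newgraph{} and show that running an \emph{online} exploration algorithm on it can be faithfully simulated inside $G$ even though up to $k$ blocked edges are still being discovered online. Concretely, when the exploration algorithm arrives at a vertex of $U \cup \{s\}$ it expects to learn the costs of all incident edges, and when it chooses to traverse one I must realize that move in $G$ at no more than the declared cost; the delicate point is guaranteeing that the revealed information and the realized traversal costs are mutually consistent and bounded by the intended metric distances, so that the $f(k)$ guarantee transfers without loss and the optimal exploration tour of \newgraph{} stays at most $|OPT|$. Showing that \textsc{ShortCut} both leaves at most $k$ unvisited vertices and keeps the first-phase cost within $2\alpha\,|OPT|$ is the other step requiring care, but it is the online-simulation fidelity of the exploration phase that I expect to be the heart of the argument.
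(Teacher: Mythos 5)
Your phase~1 matches the paper exactly: compute an $\alpha$-approximate tour, follow it with \textsc{ShortCut} (skipping past blocked edges and retracing to $s$ if needed, giving the factor $2\alpha\cdot|OPT|$ via the triangle inequality), and charge each skipped vertex to a distinct blocked edge incident to it, so at most $k$ vertices remain unvisited and the compressed instance has at most $k+1$ nodes. The gap is in phase~2, and you name it yourself without closing it: you never define the edges of the compressed graph, and no off-the-shelf choice works. Taking the complete metric of $G$ restricted to $U\cup\{s\}$ fails because those direct edges may themselves be blocked; taking the shortest-path metric of $G$ minus all blocked edges fails because distances between unvisited vertices can route through still-unknown edges inside $U$, so the traveller can neither compute them nor reveal them consistently to the online exploration algorithm. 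The paper's resolution (the \textsc{Compress} function) is a multigraph $G'$ on $U_s = U\cup\{s\}$ with two edge types: (i) the original direct edges between pairs of unvisited vertices, whose status is unknown and is revealed exactly when an endpoint is reached --- consistent with the exploration model, since a blocked edge is simply absent from the fixed unknown graph, and every edge incident to a vertex of $U_s$ either leads into $U_s$ (revealed on arrival) or to an already-visited vertex (already known); and (ii) for each pair, a precomputed ``shortest-path'' edge whose cost is that of a shortest path using only edges with at least one visited endpoint, all of which have known status and are feasible.

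This missing construction also leaves your cost bound unsupported. Your subset argument gives $|OPT_G(U_s)|\le|OPT|$, but you additionally need $|OPT_{G'}|\le|OPT_G(U_s)|$, i.e., that compression does not increase the optimum seen by the exploration algorithm. The paper proves this by decomposing an optimal tour of $U_s$ in $G$ (with blockages known) into maximal excursions outside $U_s$: every edge of such an excursion has an endpoint outside $U$, hence a visited endpoint, hence is already discovered, so the excursion's length is matched exactly by a type-(ii) shortest-path edge, while an edge between two $U_s$-vertices is present in $G'$ if and only if it is unblocked in $G$. Without this lemma, your claim that the exploration optimum of your \newgraph{} is at most $|OPT|$ does not follow. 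Your instinct that the ``online-simulation fidelity'' is the heart of the argument is accurate, but the paper's proof of the theorem consists precisely of that two-edge-type construction plus the decomposition argument, so the proposal is incomplete at its decisive step.
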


\begin{proof}
Suppose that we have \textsc{AlgoExploration}, an $f(k)$-competitive algorithm for the Online Graph Exploration problem on arbitrary graphs with at most $k+1$ nodes, and \textsc{AlgoTSP}, an $\alpha$-approximation algorithm for metric TSP. Then, we will prove that the algorithm \textsc{CompressAndExplore} that uses these algorithms as subroutines has a competitive ratio of at most $f(k)+2\alpha$ for $k$-CCTP (for pseudocode, see Algorithm~\ref{comp_expl}).

First, the algorithm runs \textsc{AlgoTSP} on the input graph $G$ to compute a TSP tour $P$. For simplicity, we relabel the vertices with respect to the tour, i.e., we assume that the tour $P$ has the order $s=v_1 \to v_2 \to \dots \to v_n \to v_1$. If an edge $\{v_i, v_j\}$ is blocked, the traveller tries to go to the next vertex in the order determined by $P$, i.e., $v_{j+1}$, or $v_1$ for $j=n$ (for pseudocode of this subroutine, see Function~\textsc{ShortCut} on page~\pageref{shortcut}). This procedure is possible since the original graph is complete. By the triangle inequality, the cost of the tour is upper bounded by the cost of tour $P$. If the traveller reaches vertex $s$, then \textsc{ShortCut} terminates. If $s$ is not reachable directly because of a blocked edge, the traveller returns to $s$ by retracing their steps. Since cost($P$) is an $\alpha$-approximation for metric TSP, we have that $\text{cost}(\textsc{ShortCut}) \le  2 \cdot \text{cost}(P) \le 2 \alpha \cdot |OPT|$, where $OPT$ is an optimal offline TSP tour on graph $G$. 

The traveller learns about all blocked edges which are adjacent to the vertices that are visited during \textsc{ShortCut}. In the procedure, all edges that are discovered to be blocked are collected in the set $E_b$. Thus, the traveller knows the whole graph (with all blocks) except for the induced (complete) subgraph formed by the unvisited vertices $U$. Let $\kappa$ be the number of vertices which remain unvisited by $ALG$ after \textsc{ShortCut}, i.e., the size of the set $U$. Then, the traveller has discovered at least $\kappa$ blocked edges, i.e., $|E_b| \ge \kappa$. 

Next, the traveller, being at $s$, has to visit the vertices in $U$. Since the true edges of the graph except for those of the induced subgraph formed by vertices in $U$ are known, it suffices to consider only the vertices in the set $U_s = U \cup \{s\}$. While the vertices in $V \setminus U_s$ themselves are not required, a shortest path between two vertices $x,y \in U_s$ might include vertices from the set $V \setminus U_s$ as intermediate nodes. This can occur when currently unknown edges between unvisited nodes are blocked. More specifically, the algorithm runs the function \textsc{Compress} (for pseudocode, see Function~\textsc{Compress} on page~\pageref{compress}).
For every pair of vertices $x,y \in U_s$, the function creates a new edge $P_{x,y}$ representing a shortest path between $x$ and $y$ such that the path consists only of edges that are known not to be blocked, i.e., edges in which at least one node has already been visited before -- if such a shortest path exists. Note that this phase of the algorithm does not incur any cost in terms of competitive ratio. Thus, the procedure creates a multigraph $G'$ which consists of vertex set $U_s$, the initial edges that connect these vertices and the ``shortest-path'' edges as described above. To better explain the steps of the algorithm we present an example of an execution of algorithm \textsc{CompressAndExplore} below (see Example~\ref{example:CAE}).

Finally, the algorithm runs \textsc{AlgoExploration} on $G'$ and visits the remaining vertices.\footnote{\textsc{AlgoExploration} solves the Exploration problem on arbitrary graphs, but $G'$ is a multigraph with at most two edges per pair of vertices. However, this does not cause a problem, since the algorithm can always select a shortest edge out of the two and the optimal solution can be computed while keeping only one edge per pair.} Every time the traveller visits a vertex, they learn all incident edges. This includes the newly added ``shortest-path'' edges, of which we know that they are feasible. If the traveller uses such a ``shortest-path'' edge, then in the final computed tour in the original graph we expand it, meaning that we use the real path that corresponds to this edge. The cost of an optimal TSP tour $OPT_{G'}$ on multigraph $G'$ is at most the cost of an optimal TSP tour $OPT_{G}(U_s)$ that only has to visit the vertex set $U_s$, i.e., the vertices that are also in $G'$, but inside the input graph $G$. To see that this holds, consider an optimal tour $OPT_G(U_s)$. Assume that it visits the vertices in $U_s$ in the order $s=x_1 \to x_2 \to \dots \to x_{|U_s|} \to x_1$. Between any vertices $x_i, x_{i+1} \in U_s$ for $i \in \{1,\dots,|U_s|-1$\} (or $x_{|U_s|}, x_1$) that are visited one after the other, $OPT_G(U_s)$ uses a shortest path. Each of these shortest paths starts in $U_s$. If it then uses an edge to another vertex in $U_s$, this edge will also be in $G'$ as each direct edge between two vertices of $U_s$ will either be blocked in both $G$ and $G'$ or not be blocked in both. Hence, we can assume that an edge $\{u,v\}$ from $u \in U_s$ to a vertex $v \notin U_s$ is taken. This is an already discovered edge, as $v$ has already been visited during \textsc{ShortCut}. Eventually, the path will re-enter into the set $U_s$ by using another already discovered edge $\{v',u'\}$ for some $v' \notin U_s$ and $u' \in U_s$. In between leaving and re-entering, all edges that were taken are also already discovered and this partial path has exactly the same length as the shortest-path edge $P_{u,u'}$ between $u,u' \in U_s$.\footnote{There might be several shortest paths with the same length, which is why the shortest path chosen for $P_{u,u'}$ and the described shortest path might differ. Still, their lengths are equal by definition.}
Continuing this argument, eventually the target vertex in $U_s$ is reached. All intermediate partial paths are inside $G'$, either since they are regular edges that also exist in $G$, or since they have been added as shortest-path edges during \textsc{ShortCut}.

The multigraph $G'$ has $\kappa+1$ vertices and at least $\kappa$ blocks have already been discovered. The number of blocked edges is at most $k$, and thus there are at most $k+1$ vertices in $G'$. So, from the hypothesis the cost incurred by \textsc{AlgoExploration} on $G'$ is at most $f(k) \cdot |OPT_{G'}|$. Since an optimal solution for visiting a subset of vertices $OPT_{G}(U_s)$ has cost at most $|OPT|$, we get the following
\begin{align*}
  \text{cost(\textsc{AlgoExploration})} \le f(k) \cdot |OPT_{G'}|
    \le f(k) \cdot |OPT_{G}(U_s)| \le f(k) \cdot |OPT| \text{ .}
\end{align*}
Overall, the algorithm has a total cost for the traveller of
\begin{align*}
  \text{cost(\textsc{CompressAndExplore})} &=  \text{cost(\textsc{ShortCut})} + \text{cost(\textsc{AlgoExploration})} \\ &\le (f(k)+2\alpha) \cdot |OPT| \text{ .} 
\end{align*}
Consequently, \textsc{CompressAndExplore} is an $(f(k)+2\alpha)$-competitive algorithm for $k$-CCTP. Note that the knowledge of $k$ does not affect the performance of the algorithm. 
\end{proof}

\begin{algorithm}[t]
\caption{\textsc{CompressAndExplore(AlgoTSP, AlgoExploration)}} \label{comp_expl}
\SetAlgoLined
\DontPrintSemicolon
\SetKwData{Left}{left}\SetKwData{This}{this}\SetKwData{Up}{up}
\SetKwFunction{Union}{Union}\SetKwFunction{FindCompress}{FindCompress}
\SetKwInOut{Input}{Input}\SetKwInOut{Output}{Output}
\SetKwInOut{Parameter}{Parameter}
\Input{A complete metric graph $G=(V,E)$ with $n$ vertices; a starting vertex $s \in V$;}
\Output{A tour that visits every vertex in $V$;}
\Parameter{\textsc{AlgoTSP}($G_1$): An algorithm that returns a TSP tour on a metric graph $G_1$; The tour has the form $s=v_1 \to v_2 \to \dots \to v_n \to v_1$;\\
    \textsc{AlgoExploration}($G_2$): An algorithm that solves the Online Graph \\Exploration problem on an arbitrary graph $G_2$ and returns a tour;}
$P \leftarrow \textsc{AlgoTSP}(G)$;\\
$G^*, U, P_1 \leftarrow \textsc{ShortCut}(G, P)$;\\
$G' \leftarrow \textsc{Compress}(G^*, U, G)$;\\
$P_2 \leftarrow \textsc{AlgoExploration}(G')$;\\
$P' \leftarrow (P_1 \to P_2)$;\\ 
\tcc{Concatenate $P_1$ and $P_2$, i.e., visit the vertices according to $P_1$, then according to $P_2$.}
\textbf{return} $P'$;
\end{algorithm}

\begin{algorithm}[t]
\label{shortcut}
\SetAlgoLined
\DontPrintSemicolon
    \SetKwFunction{FMain}{\textsc{ShortCut}}
    \SetKwProg{Fn}{Function}{:}{}
    \Fn{\FMain{$G$, $P$}}{
        \tcc{$G$ is the input graph and $P$ a TSP tour.\\
        $P$ has the form $s=v_1 \to v_2 \to \dots \to v_n \to v_1$.}
        $i  \leftarrow 1$; 
        $j  \leftarrow 2$;\\
        $U_s \leftarrow \{s\}$;
        $E_b \leftarrow \emptyset$;
        $P' \leftarrow \{s\}$;\\
        \tcc{Path $P'$ which the traveller follows is built.} 

        \While{$j \le n$}{
        Add all newly discovered blocked edges $\{v_i,x\}$, with $x \in V \setminus \{v_i\}$, to $E_b$;\\
        {\eIf{$\{v_i,v_j\}$ is not blocked}
            {$P' \leftarrow (P' \to v_j) $; \tcp*{Append $v_j$ to $P'$}    
            $ i \leftarrow j;$}
            {$ U_s \leftarrow U_s \cup \{v_j\}$;
            }
        }
        $j \leftarrow j+1$;
        }
        \eIf{$\{v_i,v_1\}$ is blocked}
            {Return to $s$ following $P'$ backwards;\\
            $P' \leftarrow$ Concatenate the path $P'$ and the reverse of $P'$ to return to $s$; 
            }
            {$P' \leftarrow (P' \to  v_1)$;}
        $G^* \leftarrow (V, E \setminus E_b)$;\\
        \textbf{return} $G^*,  U_s, P'$;
}
\textbf{end Function}
\end{algorithm}

\begin{algorithm}[t]
\label{compress}
\SetAlgoLined
\DontPrintSemicolon
    \SetKwFunction{FMain}{\textsc{Compress}}
    \SetKwProg{Fn}{Function}{:}{}
    \Fn{\FMain{$G^*$, $U_s$, $G$}}{
        \tcc{$G^*$ is the graph without the discovered blocked edges and $U_s$ the set of the remaining unvisited vertices in the graph together with the starting vertex $s$.}
        $E' \leftarrow \{\{x,y\} \in E \mid x,y \in U_s\}$; \\
        \tcc{$E'$ is the subset of edges with unknown state, i.e., of $\{x,y\}$ with $x,y \in U_s$.}
        $G' \leftarrow (U_s, E')$;\\
        $H \leftarrow (V, E \setminus E')$; \\
        \tcc{$H$ includes all edges with a known state, since in every edge at least one vertex has already been visited.}
        Let $U_s = \{ v_1',v_2', \dots, v_{|U_s|}'\}$;\\
        \For{$i\gets1$ \KwTo $|U_s|$}{
            \For{$j\gets i+1$ \KwTo $|U_s|$}{
                Find a shortest path $P_{i,j}$ from $v_i'$ to $v_j'$ in $H$; \\
                $c_{i,j} \leftarrow$ total cost of $P_{i,j}$;\\
                Add an edge $\{v_i',v_j'\}$ with cost $c_{i,j}$ to $G'$;                
        }
        }
        \textbf{return} $G'$;
}
\textbf{end Function}
\end{algorithm}

\begin{remark*}
In the proof, we allow $k \ge n-1$ as long as the resulting graph remains connected. The analysis of the competitive ratio of $O(\sqrt{k})$ in~\cite{huang} requires $k < n-1$.
\end{remark*}

\begin{myexample}\label{example:CAE}
Fig.~\ref{example} shows an example of \textsc{CompressAndExplore}. The traveller begins at vertex $s = v_1$ and moves in a counterclockwise direction. The given TSP tour by \textsc{AlgoTSP} here is $v_1 \to v_2 \to \dots \to v_{16} \to v_1$. The solid lines represent the tour that the traveller follows during \textsc{ShortCut} due to the discovered blocked edges (red dashed lines). The traveller follows the shortcut path $v_1 \to v_2 \to v_4 \to v_5 \to v_9 \to v_{10} \to v_{11} \to v_{14} \to v_{16}$ and after visiting vertex $v_{16}$, they return back to $s$ following the same path backwards. Next, the algorithm runs \textsc{Compress} and gets $G'$. Multigraph $G'$ contains $s$, the remaining unvisited vertices and at most two edges between each pair of these vertices. Between $v_i$ and $v_j$ there is the edge $\{v_i,v_j\}$ (which may be blocked) and possibly the ``shortest-path'' edge $P_{i,j}$. The cost of $P_{i,j}$ is the cost of the shortest path from $v_i$ to $v_j$ in which each edge has at least one endpoint outside of $G'$. In the example, a possible case for $i=1$ and $j=3$ is shown on the right with $P_{1,3}$ being the path $v_1 \to v_4 \to v_3$. 

Finally, the algorithm runs \textsc{AlgoExploration} on $G'$. The traveller visits all remaining vertices, returns to $s$ and the algorithm terminates.

\begin{figure}
    \caption{An example of algorithm \textsc{CompressAndExplore}.} \label{example}
     \centering
     \begin{tikzpicture}[scale=0.85]
  \def \n {16}
  \def \radius {3cm}
  \def \circradius {3pt}
  \def \s {1}
  \coordinate (P\s) at ({360/\n * (\s - 1)}:\radius);
  \draw[] (P\s) circle (\circradius);
  \draw (\s*360/16: 3.55cm) node{$v_{\s} = s$};
  \foreach \s in {2,...,\n}
  {
    \coordinate (P\s) at ({360/\n * (\s - 1)}:\radius);
    \draw[] (P\s) circle (\circradius);
    \draw (\s*360/16: 3.4cm) node{$v_{\s}$};
  }

  \foreach \s in {1,2,3,5,6,10,11,12,15}
  {
    \draw[fill=black] (P\s) circle (\circradius);  
  }

    \path[name path=line] ($(P2)$) -- ($(P3)$);
    \path[name path=circle1] (P2) circle (\circradius);
    \path[name path=circle2] (P3) circle (\circradius);
    \draw[name intersections={of=line and circle1, by=intersection1}, name intersections={of=line and circle2, by=intersection2}] (intersection1) -- (intersection2);

    \path[name path=line] ($(P3)$) -- ($(P5)$);
    \path[name path=circle1] (P3) circle (\circradius);
    \path[name path=circle2] (P5) circle (\circradius);
    \draw[name intersections={of=line and circle1, by=intersection1}, name intersections={of=line and circle2, by=intersection2}] (intersection1) -- (intersection2);

    \path[name path=line] ($(P3)$) -- ($(P4)$);
    \path[name path=circle1] (P3) circle (\circradius);
    \path[name path=circle2] (P4) circle (\circradius);
    \draw[dashed, red, name intersections={of=line and circle1, by=intersection1}, name intersections={of=line and circle2, by=intersection2}] (intersection1) -- (intersection2);

    \path[name path=line] ($(P5)$) -- ($(P6)$);
    \path[name path=circle1] (P5) circle (\circradius);
    \path[name path=circle2] (P6) circle (\circradius);
    \draw[name intersections={of=line and circle1, by=intersection1}, name intersections={of=line and circle2, by=intersection2}] (intersection1) -- (intersection2);

    \path[name path=line] ($(P6)$) -- ($(P7)$);
    \path[name path=circle1] (P6) circle (\circradius);
    \path[name path=circle2] (P7) circle (\circradius);
    \draw[dashed, red, name intersections={of=line and circle1, by=intersection1}, name intersections={of=line and circle2, by=intersection2}] (intersection1) -- (intersection2);

    \path[name path=line] ($(P6)$) -- ($(P8)$);
    \path[name path=circle1] (P6) circle (\circradius);
    \path[name path=circle2] (P8) circle (\circradius);
    \draw[dashed, red, name intersections={of=line and circle1, by=intersection1}, name intersections={of=line and circle2, by=intersection2}] (intersection1) -- (intersection2);

    \path[name path=line] ($(P6)$) -- ($(P9)$);
    \path[name path=circle1] (P6) circle (\circradius);
    \path[name path=circle2] (P9) circle (\circradius);
    \draw[dashed, red, name intersections={of=line and circle1, by=intersection1}, name intersections={of=line and circle2, by=intersection2}] (intersection1) -- (intersection2);

    \path[name path=line] ($(P6)$) -- ($(P10)$);
    \path[name path=circle1] (P6) circle (\circradius);
    \path[name path=circle2] (P10) circle (\circradius);
    \draw[name intersections={of=line and circle1, by=intersection1}, name intersections={of=line and circle2, by=intersection2}] (intersection1) -- (intersection2);

    \path[name path=line] ($(P10)$) -- ($(P11)$);
    \path[name path=circle1] (P10) circle (\circradius);
    \path[name path=circle2] (P11) circle (\circradius);
    \draw[name intersections={of=line and circle1, by=intersection1}, name intersections={of=line and circle2, by=intersection2}] (intersection1) -- (intersection2);

    \path[name path=line] ($(P11)$) -- ($(P12)$);
    \path[name path=circle1] (P11) circle (\circradius);
    \path[name path=circle2] (P12) circle (\circradius);
    \draw[name intersections={of=line and circle1, by=intersection1}, name intersections={of=line and circle2, by=intersection2}] (intersection1) -- (intersection2);

    \path[name path=line] ($(P12)$) -- ($(P13)$);
    \path[name path=circle1] (P12) circle (\circradius);
    \path[name path=circle2] (P13) circle (\circradius);
    \draw[dashed, red, name intersections={of=line and circle1, by=intersection1}, name intersections={of=line and circle2, by=intersection2}] (intersection1) -- (intersection2);

    \path[name path=line] ($(P12)$) -- ($(P14)$);
    \path[name path=circle1] (P12) circle (\circradius);
    \path[name path=circle2] (P14) circle (\circradius);
    \draw[dashed, red, name intersections={of=line and circle1, by=intersection1}, name intersections={of=line and circle2, by=intersection2}] (intersection1) -- (intersection2);

    \path[name path=line] ($(P12)$) -- ($(P15)$);
    \path[name path=circle1] (P12) circle (\circradius);
    \path[name path=circle2] (P15) circle (\circradius);
    \draw[name intersections={of=line and circle1, by=intersection1}, name intersections={of=line and circle2, by=intersection2}] (intersection1) -- (intersection2);

    \path[name path=line] ($(P15)$) -- ($(P16)$);
    \path[name path=circle1] (P15) circle (\circradius);
    \path[name path=circle2] (P16) circle (\circradius);
    \draw[red, dashed, name intersections={of=line and circle1, by=intersection1}, name intersections={of=line and circle2, by=intersection2}] (intersection1) -- (intersection2);

    \path[name path=line] ($(P15)$) -- ($(P1)$);
    \path[name path=circle1] (P15) circle (\circradius);
    \path[name path=circle2] (P1) circle (\circradius);
    \draw[name intersections={of=line and circle1, by=intersection1}, name intersections={of=line and circle2, by=intersection2}] (intersection1) -- (intersection2);

    \path[name path=line] ($(P1)$) -- ($(P2)$);
    \path[name path=circle1] (P1) circle (\circradius);
    \path[name path=circle2] (P2) circle (\circradius);
    \draw[red,dashed,name intersections={of=line and circle1, by=intersection1}, name intersections={of=line and circle2, by=intersection2}] (intersection1) -- (intersection2);

    \path[name path=line] ($(P10)$) -- ($(P1)$);
    \path[name path=circle1] (P10) circle (\circradius);
    \path[name path=circle2] (P1) circle (\circradius);
    \draw[red,dashed,name intersections={of=line and circle1, by=intersection1}, name intersections={of=line and circle2, by=intersection2}] (intersection1) -- (intersection2);

    \path[name path=line] ($(P3)$) -- ($(P15)$);
    \path[name path=circle1] (P3) circle (\circradius);
    \path[name path=circle2] (P15) circle (\circradius);
    \draw[red,dashed,name intersections={of=line and circle1, by=intersection1}, name intersections={of=line and circle2, by=intersection2}] (intersection1) -- (intersection2);
\end{tikzpicture}
\hfill
\begin{tikzpicture}[scale=0.85]
  \def \n {16}
  \def \radius {3cm}
  \def \circradius {3pt}

  \draw (0,0) node{$G'$};

  \def \s {1}
  \coordinate (P\s) at ({360/\n * (\s - 1)}:\radius);
  \draw[dotted] (P\s) circle (\circradius);
  \draw (\s*360/16: 3.55cm) node{$v_{\s} = s$};

  \foreach \s in {2,...,\n}
  {
    \coordinate (P\s) at ({360/\n * (\s - 1)}:\radius);
    \draw[dotted] (P\s) circle (\circradius);
    \draw[] (\s*360/16: 3.4cm) node{$v_{\s}$};
  }
  \coordinate (P3') at ({360/\n * (3 - 1)}:\radius-0.34cm);
  \coordinate (P15') at ({360/\n * (15 - 1)}:\radius-0.4cm);
  \coordinate (P1') at ({360/\n * (1 - 1)}:\radius-0.3cm);

  \foreach \s in {4,7,8,9,13,14,16}
  {
    \draw[] (P\s) circle (\circradius);  
  }
  \draw[fill=black] (P2) circle (\circradius);
  \draw[black] (3*360/16: 2.25cm) node{$P_{1,3}$};
\path[name path=line] ($(P2)$) -- ($(P4)$);
\path[name path=circle1] (P2) circle (\circradius);
\path[name path=circle2] (P4) circle (\circradius);
\draw[black,name intersections={of=line and circle1, by=intersection1}, name intersections={of=line and circle2, by=intersection2}] (intersection1) -- (intersection2);

\path[name path=line] ($(P2)$) -- ($(P5)$);
\path[name path=circle1] (P2) circle (\circradius);
\path[name path=circle2] (P5) circle (\circradius);
\draw[black,name intersections={of=line and circle1, by=intersection1}, name intersections={of=line and circle2, by=intersection2}] (intersection1) -- (intersection2);

\path[name path=line] ($(P5)$) -- ($(P4)$);
\path[name path=circle1] (P4) circle (\circradius);
\path[name path=circle2] (P5) circle (\circradius);
\draw[black,name intersections={of=line and circle1, by=intersection1}, name intersections={of=line and circle2, by=intersection2}] (intersection1) -- (intersection2);
\definecolor{blue-green}{rgb}{0.0, 0.87, 0.87}
\definecolor{bondiblue}{rgb}{0.0, 0.58, 0.71}
\draw[bondiblue, fill=blue-green, fill opacity=0.20, dashed] plot [smooth cycle] 
coordinates {(P1') (P2) (P3') (P4) (P7) (P8) (P9) (P13) (P14) (P15') (P16)};
\end{tikzpicture}
\end{figure}
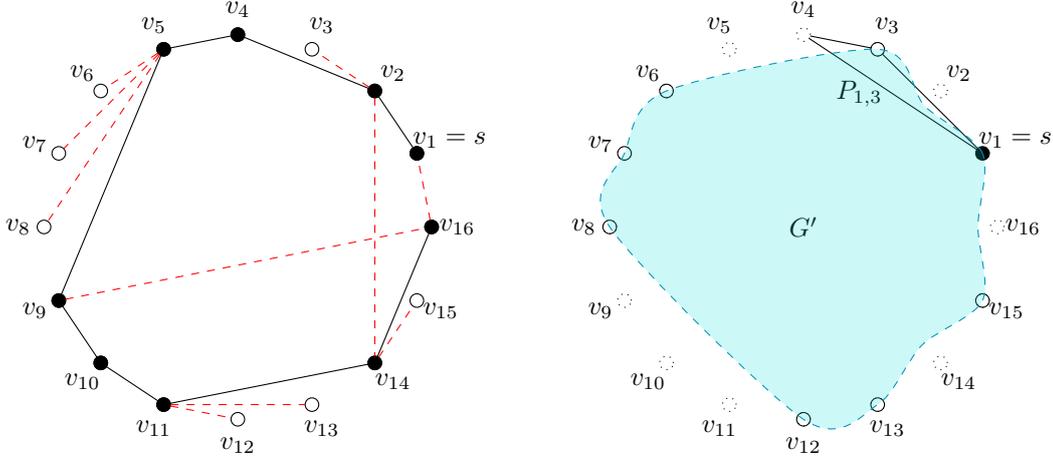
\end{myexample}

Now, we can use \textsc{CompressAndExplore} with Christophides' algorithm for metric TSP and the Nearest Neighbor for the Online Graph Exploration problem. Since this algorithm uses Christophides' algorithm and then Nearest Neighbor, we refer to it by $\cnn$.

\begin{corollary}
$\cnn$ has a competitive ratio of $O(\log k)$ for $k$-CCTP.
\end{corollary}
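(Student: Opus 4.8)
The plan is to derive the corollary directly from Theorem~\ref{reduction} by substituting the two named subroutines and reading off the resulting parameters. For the metric TSP part, $\cnn$ uses Christofides' algorithm, which is a $3/2$-approximation; hence $\alpha = 3/2$ and the additive term $2\alpha$ contributes a constant $3$. All that remains is to determine the value of $f(k)$ achieved when Nearest Neighbor is used as \textsc{AlgoExploration}.

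The key step is to bound $f(k)$. Theorem~\ref{reduction} guarantees that the exploration algorithm is only ever run on the compressed multigraph $G'$, which the proof of that theorem shows has at most $k+1$ vertices. By the known tight bound of $\Theta(\log n)$ for the competitive ratio of Nearest Neighbor on arbitrary $n$-vertex graphs~\cite{Rosenkrantz}, applying it to a graph on at most $k+1$ vertices yields a competitive ratio of $O(\log(k+1)) = O(\log k)$, so we may take $f(k) = O(\log k)$. Two points must be checked for this substitution to be legitimate, both already flagged in the setup. First, the Nearest Neighbor rule is well defined online, since the cheapest unexplored neighbor can always be identified from the locally revealed edges, so the analysis of~\cite{Rosenkrantz} transfers to the online setting verbatim. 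Second, although $G'$ carries up to two parallel edges per vertex pair, Nearest Neighbor simply follows the cheaper of the two, and the optimum tour is unaffected by discarding the more expensive parallel edge; hence the $\Theta(\log n)$ guarantee applies unchanged.

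Combining the two ingredients, Theorem~\ref{reduction} yields a competitive ratio of $f(k) + 2\alpha = O(\log k) + 3 = O(\log k)$ for $\cnn$, which is exactly the claim. I expect no serious obstacle here, as the substantive work is done by the reduction. The only genuine subtlety is the conceptual one of replacing the vertex-count parameter $n$ in the exploration bound by $k$, and this is precisely what the bound $|V(G')| \le k+1$ from Theorem~\ref{reduction} provides. (For completeness one may note that small cases such as $k \le 1$ cause no trouble, since the hidden constant absorbs the additive term and $\log(k+1)$ is bounded below by a positive constant for $k \ge 1$.)
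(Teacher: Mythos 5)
Your proposal is correct and follows essentially the same route as the paper: instantiate Theorem~\ref{reduction} with Christofides' algorithm ($\alpha = 3/2$) and Nearest Neighbor ($f(k) = O(\log(k+1))$ via the $\Theta(\log n)$ bound of~\cite{Rosenkrantz} applied to the at most $k+1$ vertices of $G'$), giving $3 + O(\log(k+1)) = O(\log k)$. The two side conditions you verify --- that NN is well defined online and that the parallel edges of the multigraph $G'$ are harmless --- are exactly the points the paper addresses in its related-work discussion and in a footnote of the reduction's proof, so nothing is missing.
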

\begin{proof}
    Christophides' algorithm gives a $3/2$-approximation for metric TSP. On the other hand, NN yields a competitive ratio of $O(\log n)$ for the Graph Exploration problem in an arbitrary graph, where $n$ is the number of vertices in the graph. Thus, from Theorem~\ref{reduction} we get that $\cnn$ has a competitive ratio of $3+O(\log(k+1)) = O(\log k)$.
\end{proof}

To demonstrate that the above analysis is tight, the following theorem presents a family of instances that achieves a competitive ratio of $\Omega(\log k)$ and therefore proves the analysis to be tight.

\begin{theorem}
    There exists a family of instances for which $\cnn$ has a competitive ratio of the $\Omega(\log k)$.
\end{theorem}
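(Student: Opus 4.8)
The plan is to turn the $O(\log k)$ upper bound on its head: since the only super-constant contribution in \cnn{} comes from running Nearest Neighbor on the compressed graph $G'$, which has at most $k+1$ vertices, it suffices to design $k$-CCTP instances on which the graph $G'$ produced by \textsc{ShortCut} and \textsc{Compress} is (the complete-metric closure of) a worst-case instance for Nearest Neighbor. Recall that NN is $\Theta(\log N)$-competitive for exploration on graphs with $N$ vertices, and that the lower bound is tight already on very restricted metrics~\cite{Rosenkrantz, Hurkens, Fritsch}. I would fix such a family: for each target size $m$, let $R_m$ be a metric on a vertex set $\{s, w_1, \dots, w_m\}$ with designated start $s$ for which the NN exploration tour costs at least $c \log m \cdot |OPT_{R_m}|$. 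The goal is to embed $R_m$ into a $k$-CCTP instance with $m = \Theta(k)$ so that, after the first phase of \cnn{}, exactly $w_1, \dots, w_m$ remain unvisited and $G'$ coincides with $R_m$.

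Concretely, I would build the complete metric graph $G$ as the shortest-path closure of the following weighted graph: take the graph underlying $R_m$ on $\{s, w_1, \dots, w_m\}$ and add a constant number of auxiliary vertices $a_0, a_1$, each joined to $s$ by an edge of tiny weight $\epsilon$ and to every target $w_i$ by an edge of weight slightly below $d_{R_m}(s, w_i)$. The source is $s$, and the $k = 2m$ blocked edges are exactly $\{a_0, w_i\}$ and $\{a_1, w_i\}$ for $i = 1, \dots, m$; note that all edges $\{s, w_i\}$ and $\{w_i, w_j\}$ stay unblocked, the graph remains connected, and $|E_b| = 2m \ge \kappa$. Since \textsc{AlgoTSP} runs on the full metric (ignoring blockages) and the only cheap edges at $s$ lead to $a_0, a_1$ while the targets are reached most cheaply through the auxiliary vertices, Christophides' tour places $s$ in the interior of the auxiliary cluster, i.e.\ both tour-neighbors of $s$ are auxiliary. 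Consequently, in either orientation \textsc{ShortCut} leaves $s$ into an auxiliary vertex, and from there every attempted step into a target is blocked, so all of $w_1, \dots, w_m$ are skipped and only the auxiliary vertices are visited. Because the targets become isolated in the known subgraph $H$ used by \textsc{Compress} (their only known incident edges are the blocked ones), no shortcut edge is created, and $G'$ is precisely the complete metric $R_m$ on $\{s, w_1, \dots, w_m\}$.

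It then remains to do the accounting. Running \textsc{AlgoExploration} $=$ NN on $G' = R_m$ from $s$ reproduces the bad NN tour, so $\mathrm{cost}(\textsc{AlgoExploration}) \ge c \log m \cdot |OPT_{R_m}|$, and since \textsc{ShortCut} only adds cost we get $\mathrm{cost}(\cnn) \ge c \log m \cdot |OPT_{R_m}|$. On the other hand, an optimal offline tour can visit the $\epsilon$-cluster $\{s, a_0, a_1\}$ at negligible cost and then traverse the targets using the unblocked edges of $R_m$, so $|OPT| = \Theta(|OPT_{R_m}|)$. With $m = \Theta(k)$ this yields a competitive ratio of $\Omega(\log m) = \Omega(\log k)$, as claimed.

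The main obstacle I anticipate is pinning down the output of Christophides' algorithm precisely enough: I must argue that, regardless of the internal tie-breaking in the MST, matching, Euler-tour and shortcutting steps, the computed Hamiltonian cycle never joins $s$ directly to a target and keeps the auxiliary vertices adjacent to $s$, so that \textsc{ShortCut} skips the entire target set in both traversal directions. This is exactly what forces the two auxiliary ``twins'' of $s$ and the choice of auxiliary-to-target weights strictly below the corresponding $s$-to-target weights; making these choices consistent with the triangle inequality (hence with $G$ being a metric) while leaving the target-to-target distances equal to $d_{R_m}$ is the delicate part of the construction. A secondary point to verify is that \textsc{Compress} adds no helpful shortcut edge to $G'$ that would let NN bypass the trap, which is precisely why the blocks are placed so as to isolate the targets in $H$.
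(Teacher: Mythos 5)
Your overall strategy is exactly the one the paper uses: trap the Christofides tour so that \textsc{ShortCut} visits only a constant-size auxiliary cluster near $s$, leaving a worst-case Nearest-Neighbor exploration instance as the compressed graph $G'$, and then charge NN its $\Omega(\log m)$ loss with $m = \Theta(k)$ (in fact the paper is even more generous, getting $k = \Theta(n^2)$ blocked completion edges). The paper instantiates this with the concrete Hurkens--Woeginger triangle chain $G_p$ and a \emph{single} auxiliary vertex $u$ attached to $s = l_p$, blocking all completion edges. Your accounting, the observation that at least one tour-direction of $s$ must lead into the auxiliary cluster, and your analysis of \textsc{Compress} (the targets are indeed isolated in $H$, since edges with both endpoints in $U_s$ -- including the unblocked $\{s,w_i\}$ -- are excluded from $H$ by construction, so no useful shortcut edges arise) are all consistent with the paper's argument.

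The genuine gap is the step you yourself flag as the ``main obstacle'': you assert, but do not prove, that Christofides' algorithm places $s$ between the two auxiliary vertices in the output tour, and for an \emph{arbitrary} bad NN metric $R_m$ this does not follow from your weight conditions. Even if the only edges incident to $s$ in the multigraph $T \cup M$ go to $a_0, a_1$, the shortcutting step of Christofides can still make a target the tour-successor of $s$: if the Euler circuit passes through $a_0$ or $a_1$ before reaching $s$, the repeated vertex is skipped and $s$ is joined directly to the next \emph{new} vertex, which may be some $w_i$ -- and since $\{s,w_i\}$ is unblocked in your construction, the trap then fails and \textsc{ShortCut} sweeps up all targets. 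One also cannot pin down the MST for generic $R_m$ (target--target edges may or may not beat your auxiliary--target edges of weight $d_{R_m}(s,w_i)-\delta$, and the metric closure shrinks target--target and $s$--target distances by up to $2\delta$, perturbing the very instance whose NN lower bound you invoke). The paper avoids all of this by choosing the construction so the Christofides execution is explicitly verifiable: in $G_p$ plus $u$, one valid MST is a Hamiltonian path from $u$ to $r_p$, the odd-degree vertices are exactly $u$ and $r_p$, so the matching adds $\{u,r_p\}$ and the tour is an explicit Hamiltonian cycle $s \to u \to r_p \to \dots \to s$; since \emph{every} edge at $u$ other than $\{u,l_p\}$ is blocked, \textsc{ShortCut} strands the traveller immediately. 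Note also that your demand for robustness under \emph{all} internal tie-breaking is stronger than necessary and makes your obstacle look harder than it is: the paper (as is standard for such lower bounds) only exhibits one consistent execution (``one possible MST'', ``a tour that can be chosen'', a particular NN run, proved in the lemma adapted from Hurkens--Woeginger, of length $(p+4)2^{p-1}-2$). To close your gap, you would either specialize $R_m$ to a graph whose unblocked part admits such a hand-checkable Hamiltonian-path MST -- at which point you have essentially reconstructed the paper's proof -- or supply a genuinely new argument controlling the MST, matching, Euler tour and shortcut steps simultaneously, which your current weight conditions do not do.
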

\begin{proof}
    We will use the graph presented in~\cite{Hurkens} to lower bound the competitive ratio of the Nearest Neighbor algorithm. For an integer $p \ge 1$, the graph $G_p = (V_p, E_p)$ consists of a chain of $2^p-1$ triangles. $G_p$ has $2^p$ vertices in its lower level, and $2^p-1$ vertices in its upper level. The left-most vertex in the lower level is denoted by $l_p$, the right-most by $r_p$ and the central vertex in the upper level is denoted by $m_p$. All edges in $G_p$ have an equal cost of $1$.
    
    For our instance, we slightly modify the graph by adding another vertex $u$ to the left of $l_p$ with an edge $\{u,l_p\}$ of cost $1$. We also set $s=l_p$ as the starting vertex. Since the input for $k$-CCTP is a complete graph, we also need to add some more edges. All edges from $u$ to the other vertices have a cost of $1$, and all other new edges have a cost of $2$. All these edges will be blocked edges. We call this new graph $\newgraph$. 
    
    The resulting graph has $k=\Theta(n^2)$ blocked edges and clearly satisfies the triangle inequality. We illustrate the non-blocked part of $\newgraph$ for $p=3$ in Fig.~\ref{fig:tightness}.
    \begin{figure}[h]
\caption{Graph $\newgraph$ (for $p=3$). $\newgraph$ is used to show tightness of the $O(\log k)$-competitive ratio.} \label{fig:tightness}
\centering
\begin{tikzpicture}[scale=0.92]
  \begin{scope}[xshift=-1.8cm]
    \coordinate (u) at (210:1) {};
    \node[shift={(-0,-0.35)}] at (u) {\footnotesize $u$};
    \draw[fill=black, draw=black] (u) circle (2pt);
  \end{scope}

  \coordinate (A1) at (90:1) {};
  \coordinate (B1) at (210:1) {};
  \coordinate (C1) at (-30:1) {};

  \draw[fill=black, draw=black] (A1) circle (2pt);
   \node[shift={(-0,-0.35)}] at (B1) {\footnotesize $l_3$};
  \draw[fill=black, draw=black] (B1) circle (2pt);
  \draw[fill=black, draw=black] (C1) circle (2pt);

  \draw (u) -- (B1);
  \draw (A1) -- (B1) -- (C1) -- (A1);
  
  \begin{scope}[xshift=1.8cm]
    \coordinate (A2) at (90:1) {};
    \coordinate (B2) at (-30:1) {};
    \draw[fill=black, draw=black] (A2) circle (2pt);
    \draw[fill=black, draw=black] (B2) circle (2pt);
    \draw (A2) --  (B2) -- (C1) -- (A2);
  \end{scope}

  \begin{scope}[xshift=3.6cm]
    \coordinate (A3) at (90:1) {};
    \coordinate (B3) at (-30:1) {};
    \draw[fill=black, draw=black] (A3) circle (2pt);
    \draw[fill=black, draw=black] (B3) circle (2pt);
    \draw (A3) --  (B3) -- (B2) -- (A3);
  \end{scope}

  \begin{scope}[xshift=5.4cm]
    \coordinate (A4) at (90:1) {};
    \coordinate (B4) at (-30:1) {};
    \draw[fill=black, draw=black] (A4) circle (2pt);
    \node[shift={(0,0.35)}] at (A4) {\footnotesize $m_3$};
    \coordinate (B4) at (-30:1) {};
    \draw[fill=black, draw=black] (B4) circle (2pt);
    \draw[fill=black, draw=black] (A4) circle (2pt);
    
    \draw (A4) --  (B4) -- (B3) -- (A4);

  \end{scope}

  \begin{scope}[xshift=7.2cm]
    \coordinate (A5) at (90:1) {};
    \coordinate (B5) at (-30:1) {};
    \draw[fill=black, draw=black] (A5) circle (2pt);
    \draw[fill=black, draw=black] (B5) circle (2pt);
    \draw (A5) --  (B5) -- (B4) -- (A5);
  \end{scope}

  \begin{scope}[xshift=9cm]
    \coordinate (A6) at (90:1) {};
    \coordinate (B6) at (-30:1) {};
    \draw[fill=black, draw=black] (A6) circle (2pt);
    \draw[fill=black, draw=black] (B6) circle (2pt);
    \draw (A6) --  (B6) -- (B5) -- (A6);
  \end{scope}

  \begin{scope}[xshift=10.8cm]
    \coordinate (A7) at (90:1) {};
    \coordinate (B7) at (-30:1) {};
    \draw[fill=black, draw=black] (A7) circle (2pt);
    \draw[fill=black, draw=black] (B7) circle (2pt);
     \node[shift={(0,-0.35)}] at (B7) {\footnotesize $r_3$};
    \draw (A7) --  (B7) -- (B6) -- (A7);
  \end{scope}
\end{tikzpicture}

\end{figure}
    
    In the first step of Christophides' algorithm, a minimal spanning tree is constructed. One possible MST $T$ is a path from $u$ to $r_p$. The nodes with uneven degree in $T$ are the nodes $u$ and $r_p$, so for the matching, the edge between $u$ and $r_p$ is added. This results in a simple cycle of all nodes. The MST and the matching edge are illustrated in Fig.~\ref{fig:MST}.

    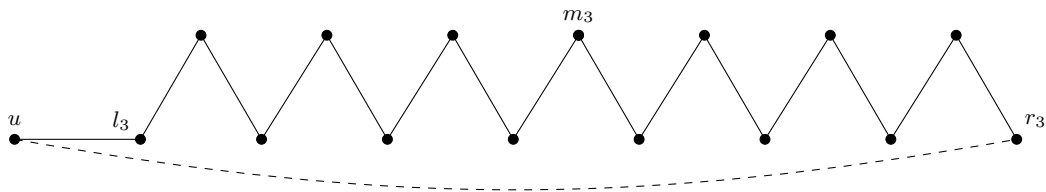
\begin{figure}[h]
\caption{The MST and the matching edge (dashed line) in $G_3^+$.} \label{fig:MST}
\centering
\begin{tikzpicture}[scale=0.92]
  \begin{scope}[xshift=-1.8cm]
    \coordinate (u) at (210:1) {};
    \node[shift={(-0,0.25)}] at (u) {\footnotesize $u$};
    \draw[fill=black, draw=black] (u) circle (2pt);
  \end{scope}

  \coordinate (A1) at (90:1) {};
  \coordinate (B1) at (210:1) {};
  \coordinate (C1) at (-30:1) {};

  \draw[fill=black, draw=black] (A1) circle (2pt);
   \node[shift={(-0.25,0.25)}] at (B1) {\footnotesize $l_3$};
  \draw[fill=black, draw=black] (B1) circle (2pt);
  \draw[fill=black, draw=black] (C1) circle (2pt);

  \draw (u) -- (B1);
  \draw (A1) -- (B1);
  \draw (C1) -- (A1);
  
  \begin{scope}[xshift=1.8cm]
    \coordinate (A2) at (90:1) {};
    \coordinate (B2) at (-30:1) {};
    \draw[fill=black, draw=black] (A2) circle (2pt);
    \draw[fill=black, draw=black] (B2) circle (2pt);
    \draw (A2) --  (B2);
    \draw (C1) -- (A2);
  \end{scope}

  \begin{scope}[xshift=3.6cm]
    \coordinate (A3) at (90:1) {};
    \coordinate (B3) at (-30:1) {};
    \draw[fill=black, draw=black] (A3) circle (2pt);
    \draw[fill=black, draw=black] (B3) circle (2pt);
    \draw (A3) --  (B3);
    \draw (B2) -- (A3);
  \end{scope}

  \begin{scope}[xshift=5.4cm]
    \coordinate (A4) at (90:1) {};
    \coordinate (B4) at (-30:1) {};
    \draw[fill=black, draw=black] (A4) circle (2pt);
    \node[shift={(0,0.25)}] at (A4) {\footnotesize $m_3$};
    \coordinate (B4) at (-30:1) {};
    \draw[fill=black, draw=black] (B4) circle (2pt);
    \draw[fill=black, draw=black] (A4) circle (2pt);
    
    \draw (A4) --  (B4);
    \draw (B3) -- (A4);

  \end{scope}

  \begin{scope}[xshift=7.2cm]
    \coordinate (A5) at (90:1) {};
    \coordinate (B5) at (-30:1) {};
    \draw[fill=black, draw=black] (A5) circle (2pt);
    \draw[fill=black, draw=black] (B5) circle (2pt);
    \draw (A5) --  (B5);
    \draw (B4) -- (A5);
  \end{scope}

  \begin{scope}[xshift=9cm]
    \coordinate (A6) at (90:1) {};
    \coordinate (B6) at (-30:1) {};
    \draw[fill=black, draw=black] (A6) circle (2pt);
    \draw[fill=black, draw=black] (B6) circle (2pt);
    \draw (A6) --  (B6);
    \draw (B5) -- (A6);
  \end{scope}

  \begin{scope}[xshift=10.8cm]
    \coordinate (A7) at (90:1) {};
    \coordinate (B7) at (-30:1) {};
    \draw[fill=black, draw=black] (A7) circle (2pt);
    \draw[fill=black, draw=black] (B7) circle (2pt);
     \node[shift={(0.25,0.25)}] at (B7) {\footnotesize $r_3$};
    \draw (A7) --  (B7);
    \draw (B6) -- (A7);
  \end{scope}
  \draw[dashed] (u) edge[bend right = 10] (B7);
\end{tikzpicture}

\end{figure}
    
    The TSP-tour can then be chosen to be $s=l_3 \to u \to r_3 \to \dots \to l_3$. This means that in \textsc{ShortCut}, only node $u$ would be visited besides $l_p$ as the direct edges from $u$ to any other node (besides $l_p$) are blocked. At the end of \textsc{ShortCut}, the traveller returns to $l_p$. 
    
    After \textsc{ShortCut}, the remaining graph would thus be the original graph $G_p$ from~\cite{Hurkens}. 
    We use the following lemma to prove that there exists a TSP-tour in $G_p$ which starts (and ends) in $l_p$ which is found by NN that has a length of $(p+4)\cdot 2^{p-1}-2$. We will prove the lemma below.
    
\begin{lemma}[Based on {\cite[Lemma 1]{Hurkens}}]\label{lem:hurkens}
    There exists a NN-based TSP tour on $G_p$ which starts in $l_p$ and visits $m_p$ as final vertex before returning back to $l_p$. The tour has length $(p+4)\cdot 2^{p-1}-2$.
\end{lemma}

    Using this result, the total cost of the described TSP-tour is $(p+4)2^{p-1}$, whereas an optimal TSP-tour has cost $2+3(2^{p}-1)$, namely visiting $u$ and optimally visiting $G_p$ by going in a zig-zag motion from left to right (as shown in the MST in Fig.~\ref{fig:MST}) and returning using the lower edges of the triangle, thereby using each edge of the triangles exactly once. This gives us a ratio of \[
    \frac{(p+4)\cdot 2^{p-1}}{2+3\cdot (2^{p}-1)} = \frac{(p+4) \cdot 2^{p-1}}{6\cdot 2^{p-1}-1} \ge \frac{p+4}{6} = \Omega(p) = \Omega(\log n)\enspace.
    \]
\end{proof}

\begin{proof}[Proof of Lemma~\ref{lem:hurkens}]
    We split the tour into two parts. In the first part, all vertices are visited, and in the second part, the traveller returns to $l_p$. 
    
    The second part has length $1+2^{p-1}-1 = 2^{p-1}$. This is true because the whole graph has been discovered and the traveller can take the shortest path from $m_p$ to $l_p$, which is going down to the left point of the middle triangle and then traversing the $2^{p-1}-1$ many triangles on the left side to reach $l_p$.
    
    Hence, to show the Lemma, it remains to show that there exists a NN-route to visit all vertices in $G_p$ which has length $(p+3)\cdot 2^{p-1}-2 = (p+4)\cdot2^{p-1} - 2 - 2^{p-1}$, starting at $l_p$ and ending at $m_p$. We prove this by induction. For $p=1$, $G_p$ consists of a single triangle, and the route $l_1 \to r_1 \to m_1$ has length $2 = (1+3) \cdot 2^{1-1} - 2$. 

    For the inductive step, we observe that $G_p$ can be constructed from two copies of $G_{p-1}$ and an additional vertex $m_p$ (and three additional edges). Let $G_{p-1}^{(l)}$ be the left copy and $G_{p-1}^{(r)}$ be the right copy. Then, the new edges are $\{r_{p-1}^{(l)}, l_{p-1}^{(r)}\}, \{r_{p-1}^{(l)}, m_p\}$ and $\{l_{p-1}^{(r)},m_p\}$. This is illustrated in Fig.~\ref{fig:G_pRecursion}. By the induction hypothesis, there exists an NN-route in $G_{p-1}^{(l)}$ starting in $l_{p} = l_{p-1}^{(l)}$ and ending in $m_{p-1}^{(l)}$ with length $(p-1+3) \cdot 2^{p-1-1}-2$. The two nearest unvisited neighbors to $m_{p-1}^{(l)}$ are $m_p$ and $l_{p-1}^{(r)}$ with equal distance $2^{p-2}+1$. By going to $l_{p-1}^{(r)}$, the sub-route from $l_{p-1}^{(r)}$ to $m_{p-1}^{(r)}$ of length $(p+2) \cdot 2^{p-2} - 2$ can then be found by NN. Note that throughout this route, $m_p$ will never be closer to the current vertex than any other unvisited vertex in the current sub-route and thus will not be visited before $m_{p-1}^{(r)}$. Finally, $m_p$ needs to be visited, which requires an additional cost of $2^{p-2}+1$. Overall, there exists an NN-route from $l_p$ to $m_p$ with length \[
    2\cdot ((p+2) \cdot 2^{p-2} -2) + 2\cdot(2^{p-2}+1) = (p+2) \cdot 2^{p-1} - 4 + 2^{p-1} + 2 = (p+3) \cdot 2^{p-1} - 2 \enspace.\] This concludes the proof.

\end{proof}
        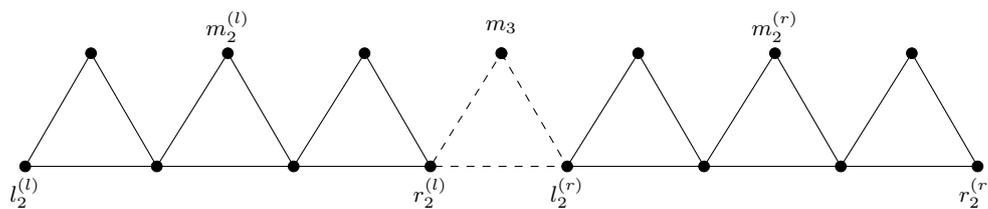
\begin{figure}[h]
\caption{Graph $G_p$ (for $p=3$) constructed from two copies of $G_{p-1}$, with the joining edges denoted by the dashed lines.} \label{fig:G_pRecursion}
\centering
    \begin{tikzpicture}
        
  \coordinate (A1) at (90:1) {};
  \coordinate (B1) at (210:1) {};
  \coordinate (C1) at (-30:1) {};

  \draw[fill=black, draw=black] (A1) circle (2pt);
   \node[shift={(-0,-0.35)}] at (B1) {\footnotesize $l_2^{(l)}$};
  \draw[fill=black, draw=black] (B1) circle (2pt);
  \draw[fill=black, draw=black] (C1) circle (2pt);

  \draw (A1) -- (B1) -- (C1) -- (A1);
  
  \begin{scope}[xshift=1.8cm]
    \coordinate (A2) at (90:1) {};
    \coordinate (B2) at (-30:1) {};
    \draw[fill=black, draw=black] (A2) circle (2pt);
    \node[shift={(0,0.35)}] at (A2) {\footnotesize $m_2^{(l)}$};
    \draw[fill=black, draw=black] (B2) circle (2pt);
    \draw (A2) --  (B2) -- (C1) -- (A2);
  \end{scope}

  \begin{scope}[xshift=3.6cm]
    \coordinate (A3) at (90:1) {};
    \coordinate (B3) at (-30:1) {};
    \draw[fill=black, draw=black] (A3) circle (2pt);
    \node[shift={(0,-0.35)}] at (B3) {\footnotesize $r_2^{(l)}$};
    \draw[fill=black, draw=black] (B3) circle (2pt);
    \draw (A3) --  (B3) -- (B2) -- (A3);
  \end{scope}

  \begin{scope}[xshift=5.4cm]
    \coordinate (A4) at (90:1) {};
    \coordinate (B4) at (-30:1) {};
    \draw[fill=black, draw=black] (A4) circle (2pt);
    \node[shift={(0,0.35)}] at (A4) {\footnotesize $m_3$};
    \coordinate (B4) at (-30:1) {};
    \draw[fill=black, draw=black] (B4) circle (2pt);
    \node[shift={(0,-0.35)}] at (B4) {\footnotesize $l_2^{(r)}$};
    \draw[fill=black, draw=black] (A4) circle (2pt);
    
    \draw[dashed] (A4) --  (B4) -- (B3) -- (A4);

  \end{scope}

  \begin{scope}[xshift=7.2cm]
    \coordinate (A5) at (90:1) {};
    \coordinate (B5) at (-30:1) {};
    \draw[fill=black, draw=black] (A5) circle (2pt);
    \draw[fill=black, draw=black] (B5) circle (2pt);
    \draw (A5) --  (B5) -- (B4) -- (A5);
  \end{scope}

  \begin{scope}[xshift=9cm]
    \coordinate (A6) at (90:1) {};
    \coordinate (B6) at (-30:1) {};
    \draw[fill=black, draw=black] (A6) circle (2pt);
    \node[shift={(0,0.35)}] at (A6) {\footnotesize $m_2^{(r)}$};
    \draw[fill=black, draw=black] (B6) circle (2pt);
    \draw (A6) --  (B6) -- (B5) -- (A6);
  \end{scope}

  \begin{scope}[xshift=10.8cm]
    \coordinate (A7) at (90:1) {};
    \coordinate (B7) at (-30:1) {};
    \draw[fill=black, draw=black] (A7) circle (2pt);
    \draw[fill=black, draw=black] (B7) circle (2pt);
     \node[shift={(0,-0.35)}] at (B7) {\footnotesize $r_2^{(r)}$};
    \draw (A7) --  (B7) -- (B6) -- (A7);
  \end{scope}
\end{tikzpicture}

\end{figure}

Finally, we remark that $\cnn$ takes polynomial time. The procedures \textsc{ShortCut} and \textsc{Compress} run in polynomial time as the required shortest paths can be computed in polynomial time. Since Christophides' algorithm and Nearest Neighbor also have polynomial time complexity, so does $\cnn$.

\section{Concluding Remarks}
In this work, we considered the Covering Canadian Traveller Problem with up to $k$ blocked edges. We improved the upper bound to $O(\log k)$ by drawing an interesting connection to the Online Graph Exploration problem. Further, we showed the tightness of our analysis. 

Our reduction implies immediate consequences of future work on the respective other problem. For one, it allows an improvement of the lower bound on the Graph Exploration problem using a general lower bound on $k$-CCTP. Currently, the best known bound for the Graph Exploration problem is $10/3$. Tightening this gap would be a very interesting result. 
Second, an improved algorithm for the Graph Exploration problem immediately gives rise to a better algorithm and upper bound on $k$-CCTP. 

Nevertheless, already an improved algorithm for $k$-CCTP or a lower bound on the Graph Exploration problem would be of independent interest without exploiting our reduction and thus provides another challenging direction of future research.

\bibliography{bibliography}

\end{document}